
\documentclass{llncs}
\usepackage{amsmath}
\usepackage{amssymb}
\usepackage{enumerate}
\usepackage{varioref}
\usepackage{charter,eulervm}

\title{{\sc Edge-Clique Graphs of Cocktail Parties have 
Unbounded Rankwidth}}

\author{
 Maw-Shang~Chang\inst{1}
\and 
 Ton~Kloks\inst{2} 
\and
 Ching-Hao~Liu\inst{2}} 
\institute{
 Department of Computer Science and Information Engineering\\
 Hungkuang University, Taiwan\\
 {\tt mschang@sunrise.hk.edu.tw}
\and 
 Department of Computer Science\\
 National Tsing Hua University, Taiwan\\
 {\tt chinghao.liu@gmail.com}
}

\pagestyle{plain}
\begin{document}

\maketitle

\begin{abstract}
In an attempt to find a polynomial-time 
algorithm for the edge-clique cover problem on cographs   
we tried to prove that the edge-clique graphs of cographs have bounded 
rankwidth.  
However, this is not the case. 
In this note 
we show that the edge-clique graphs of cocktail party graphs 
have unbounded rankwidth.
\end{abstract}

\section{Introduction}

Let $G=(V,E)$ be an undirected graph with vertex set $V$ and 
edge set $E$. A clique is a complete subgraph of $G$. 

\begin{definition}
A edge-clique covering of $G$ is a family of 
complete subgraphs such that 
each edge of $G$ is in at least one member of the family. 
\end{definition}
The minimal cardinality of such a family is the 
edge-clique covering number, and we denote it by 
$\theta_e(G)$. 

\bigskip 

The problem of deciding if $\theta_e(G) \leq k$, 
for a given natural number $k$, 
is NP-complete~\cite{kn:kou,kn:orlin,kn:holyer}. 
The problem remains NP-complete when restricted to graphs 
with maximum degree at most six~\cite{kn:hoover}.  
Hoover~\cite{kn:hoover} gives a polynomial time algorithm for 
graphs with maximum degree at most five. 
For graphs with 
maximum degree less than five, this was already done 
by Pullman~\cite{kn:pullman}. 
Also for linegraphs the problem can be solved in 
polynomial time~\cite{kn:orlin,kn:pullman}. 
 
In~\cite{kn:kou} it is shown that approximating the 
clique covering number within a constant factor smaller than two remains 
NP-complete. 

\bigskip 

Gy\'arf\'as~\cite{kn:gyarfas} showed the following 
interesting lowerbound. 
Two vertices $x$ and $y$ are {\em equivalent\/} if 
they are adjacent and have the same closed neighborhood. 

\begin{theorem}
\label{gyarfas}
If a graph $G$ has $n$ vertices and contains neither isolated 
nor equivalent vertices then $\theta_e(G) \geq \log_2(n+1)$. 
\end{theorem}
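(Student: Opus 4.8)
The plan is to encode each vertex by the set of cliques, in a fixed optimal cover, that contain it, and then argue that this encoding is injective and never empty, so that $n$ cannot exceed $2^{\theta_e(G)}-1$.

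Concretely, fix an edge-clique covering $\{C_1,\dots,C_k\}$ of $G$ with $k=\theta_e(G)$ members, and for each vertex $v$ set $S(v)=\{\,i : v\in C_i\,\}\subseteq\{1,\dots,k\}$. The first step is to observe that $S(v)\neq\es$ for every $v$: since $v$ is not isolated it lies on some edge, that edge is covered by some $C_i$, and hence $i\in S(v)$.

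The heart of the argument is to show that $v\mapsto S(v)$ is injective. Suppose $S(u)=S(v)$ with $u\neq v$. Picking any $i\in S(u)=S(v)$ gives $u,v\in C_i$, and since $C_i$ is a clique this already yields $uv\in E$. Now take any $w\in N(u)\setminus\{v\}$; the edge $uw$ is covered by some $C_j$, so $j\in S(u)=S(v)$, whence $v\in C_j$ and therefore $vw\in E$. This shows $N[u]\subseteq N[v]$, and the symmetric argument gives $N[u]=N[v]$; together with $uv\in E$ this says $u$ and $v$ are equivalent, contradicting the hypothesis. Hence $S$ is injective.

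Finally, $S$ is an injection from $V$ into the family of nonempty subsets of a $k$-element set, so $n=|V|\leq 2^k-1$, i.e.\ $k\geq\log_2(n+1)$, which is the claimed bound. I expect the only place demanding care is the injectivity step — specifically the bookkeeping that equality of the index sets forces the closed neighborhoods to coincide — since it is there that both hypotheses (no isolated and no equivalent vertices) and the covering property are all used; the rest is a direct counting argument.
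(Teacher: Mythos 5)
Your proof is correct and is essentially Gy\'arf\'as's original argument (the paper itself only cites the result without proof): encode each vertex by the nonempty set of covering cliques containing it, show that equality of codes forces adjacency plus equal closed neighborhoods, hence equivalence, and conclude $n\leq 2^k-1$. The injectivity step is carried out carefully and uses both hypotheses exactly where they are needed, so there is nothing to add.
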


Gy\'arf\'as result implies that the edge-clique 
cover problem is fixed-parameter 
tractable (see also~\cite{kn:gramm}). 
Cygan et al showed that, under the assumption of the 
exponential time hypothesis, there is 
no polynomial-time algorithm which reduces the 
parameterized problem $(\theta_e(G),k)$ to a kernel of size bounded 
by $2^{o(k)}$. In their proof the authors make use of the fact that 
$\theta_e(cp(2^{\ell}))$ is a [sic] 
``hard instance for the edge-clique cover problem, at least from a 
point of view of the currently known algorithms.''    
Note that, in contrast, the parameterized edge-clique partition 
problem can be reduced to a kernel with at most $k^2$ 
vertices~\cite{kn:mujuni}. (Mujuni and Rosamond also mention that 
the edge-clique cover problem probably has no polynomial kernel.)   

\section{Rankwidth of edge-clique graphs of cocktail parties}

\begin{definition}
The cocktail party graph $cp(n)$ is the complement of a 
matching with $2n$ vertices. 
\end{definition}

Notice that a cocktail party graph has no 
equivalent vertices. Thus, by Theorem~\ref{gyarfas}, 
\[\theta_e(cp(n)) \geq log_2(2n+1).\] 
For the cocktail party graph 
an exact formula for $\theta_e(cp(n))$ 
is given in~\cite{kn:gregory}. 
In that paper Gregory and Pullman prove that 
\[\lim_{n \rightarrow \infty} \; \frac{\theta_e(cp(n))}{\log_2(n)}=1.\] 

\bigskip 
   
\begin{definition}
Let $G=(V,E)$ be a graph. The edge-clique graph $K_e(G)$ has 
as its vertices the edges of $G$ and two vertices of 
$K_e(G)$ are adjacent when the corresponding edges in $G$ are 
contained in a clique. 
\end{definition}

Albertson and Collins prove that there is a 1-1 correspondence 
between the maximal cliques in $G$ and $K_e(G)$~\cite{kn:albertson}. 
The same holds true 
for the intersections of maximal cliques in $G$ and in $K_e(G)$. 

\bigskip 

For a graph $G$ we denote the vertex-clique cover number of 
$G$ by $\kappa(G)$. 
Thus 
\[\kappa(G) = \chi(\Bar{G}).\]  
Notice that, for a graph $G$,  
\[\theta_e(G)=\kappa(K_e(G)).\] 

\bigskip 

Albertson and Collins mention the following result (due to 
Shearer)~\cite{kn:albertson} for the graphs $K_e^r(cp(n))$, 
defined inductively by $K_e^r(cp(n))=K_e(K_e^{r-1}(cp(n)))$.
\[\alpha(K_e^r(cp(n))) \leq 3\cdot (2^r)!\] 
Thus, for $r=1$, $\alpha(K_e(cp(n))) \leq 6$. 
However, the following is easily checked. 

\begin{lemma}
\label{bound alpha}
For $n \geq 2$ 
\[\alpha(K_e(cp(n))) =4.\]
\end{lemma}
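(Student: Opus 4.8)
The plan is to recast adjacency in $K_e(cp(n))$ as a purely combinatorial condition and then invoke the structure of intersecting families of $2$-element sets. Label the $2n$ vertices of $cp(n)$ as $n$ ``classes'' $\{v,\bar v\}$ -- the non-edges of the matching -- and for a vertex $u$ let $c(u)$ be its class; an edge $e=\{a,b\}$ of $cp(n)$ satisfies $c(a)\ne c(b)$, so we may attach to it the two-element set of classes $P(e)=\{c(a),c(b)\}$. First I would establish, by inspecting the at most four vertices involved, the dichotomy: two distinct edges $e,f$ lie in a common clique of $cp(n)$ \emph{unless} there is a class $X$ met by both $e$ and $f$ in which $e$ and $f$ use different vertices of $X$; call such a pair $e,f$ \emph{conflicting}. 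Thus an independent set of $K_e(cp(n))$ is precisely a family $\mathcal E$ of pairwise conflicting edges, and conflict forces $P(e)\cap P(f)\ne\es$ for all $e,f\in\mathcal E$.

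Next I would use the elementary fact that a family of $2$-element sets that pairwise intersect either has a common element or is contained in a ``triangle'' $\{\{A,B\},\{B,C\},\{C,A\}\}$. Applied to $\{P(e):e\in\mathcal E\}$ this splits the argument. If all $P(e)$ contain a fixed class $X=\{v,\bar v\}$, then every edge of $\mathcal E$ uses $v$ or $\bar v$ (and not both, as $\{v,\bar v\}$ is a non-edge); among the edges through $v$, two conflict only if they differ at some common class other than $X$, which forces them to be $\{v,w\}$ and $\{v,\bar w\}$ for a single further class $\{w,\bar w\}$, so the conflict relation on edges through $v$ is a matching and admits a clique of size at most $2$. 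Hence at most $2$ edges of $\mathcal E$ use $v$, at most $2$ use $\bar v$, and $|\mathcal E|\le 4$. In the triangle case, partition $\mathcal E$ by the value $P(e)\in\{\{A,B\},\{B,C\},\{C,A\}\}$; whenever two of the three types are nonempty -- say those meeting class $A$ -- every edge of one type must use the other $A$-vertex from every edge of the other, so all edges of a given type agree on that coordinate. Pushing this through the few possibilities for which types are empty yields $|\mathcal E|\le 3$ when all three types occur, $|\mathcal E|\le 4$ when exactly one is empty (both surviving types then have their shared coordinate fixed, so each has at most two edges), and $|\mathcal E|\le 4$ trivially when only one type occurs, there being only four edges with a prescribed value of $P$. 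This gives $\alpha(K_e(cp(n)))\le 4$.

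For the matching lower bound, since $n\ge 2$ I would simply exhibit, for two classes $\{v_1,\bar v_1\}$ and $\{v_2,\bar v_2\}$, the four edges $\{v_1,v_2\}$, $\{v_1,\bar v_2\}$, $\{\bar v_1,v_2\}$, $\{\bar v_1,\bar v_2\}$, whose class-pairs all equal $\{c(v_1),c(v_2)\}$, so that any two of them differ in a coordinate and hence conflict; equivalently, $cp(2)=C_4$ has four pairwise-conflicting edges and $K_e(cp(2))$ embeds in $K_e(cp(n))$ as an induced subgraph. Combining the two bounds gives $\alpha(K_e(cp(n)))=4$. The only delicate point is the very first step -- checking the conflict dichotomy also in the sub-cases where $e$ and $f$ share a vertex -- together with the short but exhaustive case analysis in the triangle case; everything after that is bookkeeping.
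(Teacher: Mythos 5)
Your proposal is correct, and it reaches the bound by a genuinely different organization of the case analysis than the paper does. The paper's proof is local: it exhibits the same lower-bound witness (the four edges spanned by two matching pairs, i.e.\ an induced $C_4$), and for the upper bound it fixes one edge $e=\{x_i,x_j\}$ of a putative independent set, notes that every other member must contain $y_i$ or $y_j$, fixes a second member $f=\{y_i,y_k\}$, and checks by hand that only two further edges (namely $\{y_i,x_k\}$ and $\{y_j,x_i\}$) can be non-adjacent to both --- leaving the remaining choices of $f$ (such as $\{y_i,y_j\}$ or $\{y_i,x_j\}$) to the reader. Your proof is global: you encode each edge by its pair of matching classes $P(e)$, observe that non-adjacency in $K_e(cp(n))$ (your ``conflict'') forces the sets $P(e)$ to pairwise intersect, and then invoke the star-or-triangle dichotomy for intersecting families of $2$-sets, finishing each branch with the observation that conflict among edges sharing a vertex of the central class is a perfect matching (hence contributes at most $2$), respectively that nonempty adjacent types pin down coordinates. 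All the individual verifications check out, including the dichotomy in the shared-vertex sub-case and the count of at most $2+2$ edges in the star case. What your route buys is completeness and reusability --- the intersecting-family skeleton makes it transparent that no configuration has been overlooked, at the cost of a longer write-up; the paper's version is terser but genuinely sketchier, as its ``easily checked'' hides exactly the sub-cases your triangle analysis makes explicit. One cosmetic remark: your triangle sub-cases with at most two nonempty types are already stars (two sets of the triangle share an element), so they are subsumed by your first case; only the sub-case with all three types nonempty, giving the bound $3$, is genuinely new there.
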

\begin{proof}
Let $G$ be the complement of 
a matching $\{x_i,y_i\}$, for $i \in \{1,\dots,n\}$. 
Let $K=K_e(G)$. 
Obviously, every pair of edges in the matching induces 
an independent set with four vertices in $K$. 

\medskip 

\noindent
Consider an edge $e=\{x_i,x_j\}$ in $G$. The only 
edges in $G$ that are not adjacent to $e$ in $K$, must have 
endpoints in $y_i$ or in $y_j$. Consider an edge 
$f=\{y_i,y_k\}$ for some $k \notin \{i,j\}$. The only other 
edge incident with $y_i$, 
which is not adjacent in $K$ to $f$ nor to $e$ is 
$\{y_i,x_k\}$. 

\medskip 

\noindent
The only edge incident with $y_j$ which is not adjacent to 
$e$ nor $f$ is $\{y_j,x_i\}$. This proves the lemma. 
\qed\end{proof}

\bigskip 

\begin{definition}
A class of graphs $\mathcal{G}$ is $\chi$-bounded if there 
exists a function $f$ such that for every graph $G \in \mathcal{G}$, 
\[\chi(G) \leq f(\omega(G)).\]
\end{definition}
Dvo\v{r}\'ak and Kr\'al proved that the class of graphs 
with rankwidth at most $k$ is $\chi$-bounded~\cite{kn:dvorak}. 

\bigskip 

We now easily obtain our result. 

\begin{theorem}
The class of edge-clique graphs of cocktail parties has 
unbounded rankwidth. 
\end{theorem}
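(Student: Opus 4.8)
The plan is to combine the three ingredients that the excerpt has carefully assembled: Gy\'arf\'as' logarithmic lower bound on $\theta_e$, the identity $\theta_e(G)=\kappa(K_e(G))=\chi(\overline{K_e(G)})$, and the $\chi$-boundedness of bounded-rankwidth graph classes due to Dvo\v{r}\'ak and Kr\'al. The idea is a proof by contradiction: suppose the class $\{K_e(cp(n)) : n\ge 2\}$ had rankwidth bounded by some constant $k$. Then, since rankwidth is preserved (up to a fixed function, or even exactly) under complementation, the class of complements $\{\overline{K_e(cp(n))}\}$ also has rankwidth at most $k'$ for some constant $k'$, and hence by \cite{kn:dvorak} is $\chi$-bounded: there is a function $f$ with $\chi(\overline{K_e(cp(n))}) \le f(\omega(\overline{K_e(cp(n))}))$ for all $n$.

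Next I would translate both sides of that inequality into the language of $K_e(cp(n))$. The left-hand side is $\chi(\overline{K_e(cp(n))}) = \kappa(K_e(cp(n))) = \theta_e(cp(n))$, which by Theorem~\ref{gyarfas} (applied via the remark $\theta_e(cp(n)) \ge \log_2(2n+1)$) grows without bound as $n\to\infty$. The right-hand side is $f(\omega(\overline{K_e(cp(n))})) = f(\alpha(K_e(cp(n))))$, and here Lemma~\ref{bound alpha} is exactly what is needed: $\alpha(K_e(cp(n)))=4$ for all $n\ge 2$, so the argument of $f$ is the constant $4$ and the whole right-hand side is the constant $f(4)$. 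Thus we would have $\log_2(2n+1) \le \theta_e(cp(n)) \le f(4)$ for every $n\ge 2$, which is absurd once $n$ is large enough. This contradiction shows no such $k$ exists, i.e. the rankwidth is unbounded.

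The only genuinely delicate point is the claim that bounding the rankwidth of a class also bounds the rankwidth of the class of complements. This is standard — rankwidth is defined through the cut-rank function $\rho_G(X)$, which is the $GF(2)$-rank of the bipartite adjacency submatrix between $X$ and $V\setminus X$, and complementing $G$ changes this matrix only by adding the all-ones matrix to that block, changing its rank by at most $1$; so $|\rho_G(X)-\rho_{\overline G}(X)|\le 1$ for every $X$, whence $\mathrm{rw}(\overline G)\le \mathrm{rw}(G)+1$ — but it should be stated explicitly, perhaps with a one-line justification or a citation, since the whole reduction hinges on applying $\chi$-boundedness to the complements rather than to the $K_e(cp(n))$ themselves. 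Everything else is bookkeeping: assembling the chain $\log_2(2n+1)\le\theta_e(cp(n))=\chi(\overline{K_e(cp(n))})\le f(\alpha(K_e(cp(n))))=f(4)$ and letting $n\to\infty$.
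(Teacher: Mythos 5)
Your proposal is correct and follows essentially the same route as the paper: assume bounded rankwidth, pass to complements (losing at most one in rankwidth), apply the Dvo\v{r}\'ak--Kr\'al $\chi$-boundedness result, and derive the contradiction from the chain $\log_2(2n+1)\le\theta_e(cp(n))=\chi(\overline{K_e(cp(n))})\le f(\alpha(K_e(cp(n))))=f(4)$. Your explicit cut-rank justification of $\mathrm{rw}(\overline G)\le\mathrm{rw}(G)+1$ is a nice addition where the paper simply cites Oum.
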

\begin{proof}
It is easy to see that the rankwidth of any graph 
is at most one more than the rankwidth of its complement~\cite{kn:oum}. 
Assume that there is a constant $k$ such that 
the rankwidth of $K_e(G)$ is at most $k$ whenever $G$ is a 
cocktail party graph. Let 
\[\mathcal{K}=\{\; \overline{K_e(G)} \;|\; G \simeq cp(n), 
\;n \in \mathbb{N}\;\}.\] 
Then the rankwidth of graphs in $\mathcal{K}$ is uniformly bounded 
by $k+1$. By the result of Dvo\v{r}\'ak and Kr\'al, there exists 
a function $f$ such that 
\[\kappa(K_e(G)) \leq f(\alpha(K_e(G)))\] 
for every cocktail party graph $G$. This contradicts 
Lemma~\ref{bound alpha} and Theorem~\ref{gyarfas}. 
\qed\end{proof}
  
\section{Concluding remark}

As far as we know, the recognition of 
edge-clique graphs is an open problem. 

\begin{conjecture}
The edge-clique cover problem is NP-complete for cographs. 
\end{conjecture}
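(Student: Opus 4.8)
The plan is to settle the conjecture in two stages: membership in NP, which is routine, and NP-hardness, which is the substance. Membership is immediate, since a certificate is a list of at most $k$ vertex subsets, and in polynomial time one checks that each is a clique of $G$ and that every edge lies in at least one of them; as $\theta_e(G) \le |E|$, the certificate has polynomial size. So the work is entirely in exhibiting a polynomial-time reduction from a known NP-complete problem whose output graphs are guaranteed to be cographs. Note that the difficulty cannot come from $G$ itself, since cographs are perfect and all of $\chi, \omega, \alpha, \kappa$ are polynomial on them; rather, it must come from the fact that $K_e$ carries a cograph outside its own class, so that $\kappa(K_e(G))$ can be hard even though $G$ is simple.

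The structural handle I would exploit is the join operation, under which cographs are closed. If $G = A + B$ is the join of two cographs, then every clique of $G$ is the union of a clique of $A$ and a clique of $B$, so the edges of $G$ split into three groups, and covering the cross-edges between $A$ and $B$ amounts to covering the complete bipartite graph $V(A) \times V(B)$ by combinatorial rectangles $S \times T$ with $S$ a clique of $A$ and $T$ a clique of $B$. This is precisely a constrained rectangle- (biclique-) cover problem of the kind that is NP-hard in its unconstrained form. The idea is to choose $A$ and $B$ to be cographs whose maximal cliques realize the sets of a prescribed set system (for instance complete multipartite graphs, which are joins of independent sets, or small nested joins), so that a minimum cross-cover encodes a set-cover or exact-cover instance. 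Because $\theta_e(G) = \kappa(K_e(G)) = \chi(\overline{K_e(G)})$, an equivalent route is to build a cograph $G$ for which $\overline{K_e(G)}$ is a graph whose chromatic number answers the source problem.

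Concretely, I would first try reducing from set cover, or a promise version with a built-in gap, engineering the within-$A$ and within-$B$ covers to be cheap and forced so that the optimum $\theta_e(G)$ equals a fixed additive constant plus the minimum number of rectangles needed for the cross-edges; a YES-instance of the source problem would then correspond to $\theta_e(G) \le k$ and a NO-instance to $\theta_e(G) > k$. As a sanity check the construction must respect the known behaviour on uniform inputs --- the formula $\theta_e(cp(n)) \approx \log_2 n$ of Gregory and Pullman and the bound $\alpha(K_e(cp(n))) = 4$ of Lemma~\ref{bound alpha} show that regular multipartite graphs are tame, so the hardness must be forced by deliberately irregular set systems.

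The main obstacle, and the reason the statement is still only a conjecture, is the lower bound. A single clique of $G = A + B$ simultaneously covers within-edges of $A$, within-edges of $B$, and a rectangle of cross-edges, so an adversarial optimal cover may ``share'' cliques across the three groups and defeat a naive accounting; ruling out such savings and proving that the constructed cograph cannot be covered below the intended threshold is the crux. A second difficulty is that one cannot simply prescribe $K_e(G)$ and invert, since the recognition of edge-clique graphs is open, as noted above, and it is unclear which graphs arise as $K_e$ of a cograph; hence the argument must bound $\theta_e(G)$ of the explicitly constructed cograph directly rather than through its edge-clique graph. Overcoming the lower-bound accounting, most plausibly via a charging or counting argument tuned to the chosen set system, is where I expect the real effort to lie.
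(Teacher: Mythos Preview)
The paper contains no proof of this statement: it is posed as an open \emph{conjecture} in the concluding remarks, with no argument offered beyond the motivation implicit in the main theorem (that edge-clique graphs of cocktail parties, and hence of cographs, have unbounded rankwidth, so the Dvo\v{r}\'ak--Kr\'al machinery gives no polynomial algorithm). There is therefore nothing in the paper to compare your proposal against.

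Your write-up is not a proof either, and you say as much: it is a research plan. The plan is sensible --- membership in NP is indeed trivial, and encoding a set-cover-type problem in the cross-edges of a join $A+B$ is a natural line of attack --- but you correctly identify the real obstruction, namely that a single maximal clique of $A+B$ covers edges in all three groups simultaneously, so controlling the lower bound on $\theta_e$ requires ruling out nontrivial sharing. You do not overcome this obstruction; you only name it. That is the genuine gap: the proposal contains no lower-bound mechanism (no charging scheme, no fooling-set or rank argument, no gadget with a provable minimum) that would force $\theta_e(G)$ above the intended threshold on NO-instances. Until such a mechanism is supplied, what you have is a plausible strategy, not a reduction, and the conjecture remains open exactly as the paper leaves it.
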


\end{document}